\documentclass[journal,twoside,web]{ieeecolor}
\usepackage{tabularx} 
\usepackage{diagbox}
\usepackage{graphicx}
\usepackage{mathrsfs}
\usepackage{generic}
\usepackage{cite}
\usepackage{amssymb,amsfonts}
\usepackage{algorithmic}
\usepackage{textcomp}
\def\BibTeX{{\rm B\kern-.05em{\sc i\kern-.025em b}\kern-.08em
    T\kern-.1667em\lower.7ex\hbox{E}\kern-.125emX}}
\markboth{IEEE Control Systems Letters, VOL. XX, NO. XX, XXXX 2021}
{Shakouri and Assadian: A Framework for Prescribed-Time Control Design via Time-Scale Transformation}

\usepackage[dvipsnames]{xcolor}
\usepackage{color, soul}
\sethlcolor{Bittersweet}

\newcommand{\comb}{\genfrac{(}{)}{0pt}{}}

\usepackage{textcomp}
\def\BibTeX{{\rm B\kern-.05em{\sc i\kern-.025em b}\kern-.08em
    T\kern-.1667em\lower.7ex\hbox{E}\kern-.125emX}}
\usepackage{float}
\usepackage[tbtags]{amsmath}
\usepackage[tbtags]{mathtools}
\usepackage{accents}
\usepackage{multicol}
\usepackage{multirow}
\usepackage[ruled,vlined]{algorithm2e}
\usepackage{cite}
\usepackage{pgfplots}

\usepackage{array,multirow}
\usepackage{hhline}
\usepackage{arydshln}

\usepackage[hidelinks=true]{hyperref}

\newtheorem{theorem}{Theorem}
\newtheorem{corollary}{Corollary}
\newtheorem{lemma}{Lemma}
\newtheorem{example}{Example}
\newtheorem{assumption}{Assumption}

\newtheorem{problem}{Problem}
\newtheorem{alg}{Algorithm}
\newtheorem{remark}{Remark}

\newtheorem{definition}{Definition}

\makeatletter
\def\fnum@figure{\textcolor{subsectioncolor}{\sf Fig.~\thefigure}}
\def\fnum@table{\textcolor{subsectioncolor}{\sf TABLE~\thetable}}
\makeatother

\begin{document}
\title{A Framework for Prescribed-Time Control Design via Time-Scale Transformation}
\author{Amir Shakouri, \IEEEmembership{Member, IEEE}, and Nima Assadian, \IEEEmembership{Senior Member, IEEE}
\thanks{The authors are with the Department of Aerospace Engineering, Sharif University of Technology, Tehran, Iran (e-mail: \href{mailto:a_shakouri@outlook.com}{a$\_$shakouri@outlook.com}; \href{mailto:assadian@sharif.edu}{assadian@sharif.edu}).}}

\maketitle

\pagestyle{empty}
\thispagestyle{empty}

\begin{abstract}
This letter presents a unified framework for the design of prescribed-time controllers under time-varying input and state constraints for normal-form unknown nonlinear systems with uncertain input gain. The proposed approach is based on a time-domain mapping method by which any infinite-time system can be corresponded to a prescribed-time system and vice versa. It is shown that the design of a constrained nonasymptotic prescribed-time controller can be reduced to the asymptotic control design for an associated constrained infinite-time system. Fa\`a di Bruno's formula and Bell polynomials are used for a constructive representation of the associated infinite-time system. The presented results are not confined to a particular mapping function, which adds to the flexibility of the proposed scheme. It is shown that necessary and sufficient conditions on the uniform (practical) prescribed-time stability and attractivity can be obtained as corollaries of the main result.
\end{abstract}
\begin{IEEEkeywords}
Prescribed-time control, state and input constraints, time-transformation mapping, Fa\`a di Bruno's formula, Bell polynomials.
\end{IEEEkeywords}

\section{Introduction}
\label{sec:I}

\IEEEPARstart{T}{he} nonasymptotic control of linear and nonlinear systems has been a challenging field of research over the past two decades, and different approaches have been developed to tackle this problem. The user command over the time of convergence divides the nonasymptotic schemes into three major categories of finite-time, fixed-time, and prescribed-time methods. In the finite-time approach, the system state converges to the origin at an unknown finite time depending on the initial conditions. Fixed-time controllers maintain the mentioned features of the finite-time methods, while in this category, an upper bound can be estimated for the convergence time, independently of the initial conditions. The prescribed-time control is a time-varying scheme where the system is forced to converge precisely at the commanded moment. The robustness of the prescribed-time controllers (PTCs) against matched disturbances and their adjustable settling time has many applications in practical engineering systems, especially when multiple time-varying systems cooperate in a dynamically changing environment. The safety and accuracy of uncertain industrial processes, aerospace systems, robotic systems, etc., can be enhanced by imposing hard time constraints on the control tasks.

The nonsmooth feedback control \cite{in2,in3} and the terminal sliding mode control \cite{in5,in6,in7} are the most popular methods for finite-time control of nonlinear systems. To deal with uncertainties, different approaches are used in the literature for achieving an adaptive finite-time control and tracking scheme \cite{in8,in9,in11}. The fixed-time control is first proposed for linear systems by Polyakov \cite{in12}. Since then, many investigations have been carried out to solve the fixed-time control problem of nonlinear systems. The non-singular terminal sliding mode control is proposed in \cite{in13} for a class of second-order nonlinear systems with matched disturbances. An output feedback scheme for disturbed double-integrator systems is addressed in \cite{in14}, while the stabilization of high-order integrator systems with mismatched disturbances is studied in \cite{in15}. 

Prescribed-time stabilization of uncertain nonlinear systems is originally proposed by Song et al. in \cite{in18}. In the prescribed-time control methods, a time transformation mapping from the infinite time scale onto a desired finite time scale has been the central idea to achieve a time-varying controller. The controllers obtained through this approach have also demonstrated disturbance suppressing behavior against non-vanishing and parametric uncertainties \cite{in20,IrscheidBleymehlRudolph}. A similar approach has been used in the literature, called the generalized time transformation method \cite{tran2020finite}, which essentially incorporates the same idea. PTCs for systems with matched uncertainty and uncertain input gain have been studied in \cite{krishnamurthy2020dynamic,krishnamurthy2020prescribed}. Prescribed-time stabilization with some robustness to unmatched uncertainties and measurement noise is investigated in \cite{chitour2020stabilization,ye2021prescribed}. A class of PTCs with linear decay rate is proposed in \cite{shakouri2021prescribed} for unknown nonlinear systems with uncertain input gain. In addition, input-constrained PTCs are analyzed in \cite{garg2020prescribed}. PTC design using parametric Lyapunov equations is studied in \cite{zhou2020finite,zhou2021prescribed}. The prescribed-time extremum seeking control is proposed by \cite{poveda2021non}, and the inverse optimality of PTCs is studied in \cite{li2021stochastic}. Moreover, a predictor-feedback PTC for systems with input delay is proposed by \cite{espitia2021predictor}. In terms of applications, the reader can refer to the PTCs developed for robotic systems in \cite{cui2021prescribed,shakouri2021prescribedeul,cao2021practical} and those tested empirically in \cite{wang2019adaptive,zhang2021practical}. 

This letter presents a unified framework for the prescribed-time control of unknown nonlinear systems in the form of a chain of integrators with matched unmodeled dynamics subject to time-varying state and input constraints. The proposed approach is general so that the majority of the previous methods can be considered particular types of the presented scheme. In fact, this letter extends the design approach studied by the authors in \cite{shakouri2021prescribedeul} to higher-order nonlinear systems. In this study, using Fa\`a di Bruno's formula and Bell polynomials, we introduce a constructive method to represent the associated infinite-time version of a prescribed-time system obtained by a user-defined sufficiently differentiable time transformation mapping function. Then, we show that the nonasymptotic prescribed-time control design can be reduced to the conventional asymptotic infinite-time control design for the associated infinite-time system. Also, we show how the constraints' role is preserved under the mapping process.

\section{Preliminaries}
\label{sec:II}

\subsection{Notations}
\label{sec:II-1}

Let $\mathbb{R}^{m,n}$ and $\mathbb{R}^n$ denote the space of $m \times n$ real matrices and $n$-dimensional vectors, respectively. The $n$-dimensional identity matrix is denoted by $\mathbb{I}_n$. The $i$th entry of vector $r$ and the $ij$th entry of matrix $M$ are referred to by $r_i$ and $M_{ij}$, respectively. The inverse and the transpose of matrix $M$ are denoted by $M^{-1}$ (if the inverse exists) and $M^T$, respectively. The inverse function of $f(\cdot)$ is denoted by $f^{-1}(\cdot)$ (if the inverse exists). The $i$th derivative of function $f$ with respect to its argument is shown by $f^{(i)}$. The space of continuous functions that have continuous first $r$ derivatives on their domain is denoted by $C^r$ and the space of functions that are smooth on their domain are shown by $C^\infty$. The 2-norm (for vectors and matrices) and the absolute value (for scalars) are denoted by $\|\cdot\|$ and $|\cdot|$, respectively. 

The binomial coefficient, indexed by a pair of integers $n$ and $m$, is denoted by $\comb{n}{m}$. The number of partitions of set $\{1,2,\ldots,n\}$ with block sizes $c_1,\ldots,c_m$ is referred to by $\comb{n}{c_1,\ldots,c_m}$. The partial Bell polynomials are denoted by $B_{n,m}(s_1,s_2,\ldots,s_{n-m+1})$ that can be recurrently obtained as \cite{comtet2012advanced}:
\begin{equation}
\label{eq:note-1}
B_{n,m}=\sum_1^{n-m+1}\comb{n-1}{m-1}s_iB_{n-i,m-1},
\end{equation}
such that $B_{0,0}=1$, $B_{n,0}=0$ for $n\geq1$, and $B_{0,m}=0$ for $m\geq1$\footnote{Subscript notations $X_{ij}$ and $Y_{i,j}$ should not be confused throughout the letter as the former addresses an entry in matrix $X$, while the latter denotes the dependency of $Y$ on two indices $i$ and $j$.}.

\subsection{System description and problem statement}

Consider a nonlinear system in its normal form as follows: 
\begin{equation}
\label{eq:1}
\left\{
\begin{array}{lcl}
\dot{x}_i&=&x_{i+1};\quad i=1,2,\ldots,n-1 \\
\dot{x}_n&=&f(x,u,t)+g(x,t)u
\end{array}\right.,
\end{equation}
where $x=[x_1,\ldots,x_n]^T\in\mathbb{R}^{n}$ and $u\in\mathbb{R}$ are the state vector and the control input, respectively. Functions $f(\cdot,\cdot,\cdot):\mathbb{R}^{n}\times\mathbb{R}\times[0,\infty)\rightarrow\mathbb{R}$ and $g(\cdot,\cdot):\mathbb{R}^{n}\times[0,\infty)\rightarrow\mathbb{R}$ stand for the unknown dynamics and the known nonzero control input gain, respectively. Note that any uncertainty in the input gain can be accounted in function $f(x,u,t)$.

\begin{assumption}
\label{ass:1}
Suppose that system \eqref{eq:1} is controllable, its dimension is known, and there exists a known set-valued function $\mathcal{F}(x,u,t)\subset\mathbb{R}$ such that $f(x,u,t)\in\mathcal{F}(x,u,t)$ for all $x\in\mathbb{R}^n$, $u\in\mathbb{R}$, and $t\in[0,\infty)$.
\end{assumption}

The main problem of the letter can be stated as follows:

\begin{problem}
\label{prob:1}
Let $t_0$ denote the initial time and $h(\cdot,\cdot,\cdot):\mathbb{R}^{n}\times\mathbb{R}\times[0,\infty)\rightarrow\mathbb{R}^m$ is defined by the user. Suppose that Assumption \ref{ass:1} holds. Find a function $\pi(\cdot,\cdot,\cdot):\mathbb{R}^{n}\times[0,\infty)\times(0,\infty)\rightarrow\mathbb{R}$ and a set $\mathcal{T}\subseteq(0,\infty)$ such that for every $\tau\in\mathcal{T}$ the closed-loop solution constructed by $u=\pi(x,t,\tau)$ and system \eqref{eq:1} satisfies a feasible control objective as:
\begin{equation}
\label{eq:prob:1-1}
h(x,u,t)\in\mathcal{H}(t),\hspace{2mm}\forall t\in[t_0,t_0+\tau),
\end{equation}
where $\mathcal{H}\subseteq\mathbb{R}^m$ is an arbitrary set-valued function of time.
\end{problem}

Problem \ref{prob:1} seeks a PTC for an unknown nonlinear system where the control objective is considered in its most general form, allowing a wide range of constraints involving state, input, and time. For instance, time-varying input and output constraints such as $|u|<\delta_{\mathrm{in}}(t)$ and $|x_1|<\delta_{\mathrm{out}}(t)$ can be represented by $h=u$ and $h=x_1$ with $\mathcal{H}(t)=\{h:|h|<\delta_{\mathrm{in}}(t)\}$ and $\mathcal{H}(t)=\{h:|h|<\delta_{\mathrm{out}}(t)\}$, respectively. Moreover, by means of condition \eqref{eq:prob:1-1}, user-defined decay rates (e.g., faster than exponential) or maximum energy efforts (e.g., considering a functional $h=\int_{t_0}^
{t}(x^T(\nu)Qx(\nu)+u^T(\nu)Ru(\nu))\mathrm{d}\nu$) can be imposed to the system.

\subsection{Basic definitions and formulations}

The definitions and formulas introduced in this subsection are frequently used to present and prove the main results of this letter. 

\begin{definition}[see \cite{shakouri2021prescribedeul}]
\label{def:1}
A $C^{n}$ function $\kappa(\cdot):[0,\tau)\rightarrow[0,\infty)$ is said to be class $\mathcal{K}$ (or $\kappa\in\mathcal{K}(\tau)$) if $\dot{\kappa}(t)>0$, $\kappa(0)=0$, and $\lim_{t\rightarrow\tau^-}\kappa(t)=\infty$. This class is a special surjective form of a more general case used in the literature under the same name (see \cite[Definition 4.2]{khalil2002nonlinear}).
\end{definition}

\begin{definition}[see \cite{shakouri2021prescribedeul}]
\label{def:2}
A $C^{n}$ function $\mu(\cdot):[0,\infty)\rightarrow[0,\tau)$ is said to be class $\mathcal{M}$ (or $\mu\in\mathcal{M}(\tau)$) if its inverse function is class $\mathcal{K}$ (i.e., $\mu^{-1}\in\mathcal{K}(\tau)$). Therefore, $\mu$ is a continuous function subject to $\dot{\mu}(t)>0$, $\mu(0)=0$, and $\lim_{t\rightarrow\infty}\mu(t)=\tau$.
\end{definition}

\begin{example}
\label{ex:1}
The following function is of class $\mathcal{K}$ for any $a_i>0$, $b_i>1$, and $n\geq1$:
\begin{equation}
\label{eq:ex:1-1}
\kappa(t)=-\sum_{i=1}^na_i\log_{b_i}\left(1-\frac{t}{\tau}\right),
\end{equation}
and the following function is a class $\mathcal{M}$ function for any $a_i>1$, $b_i>0$, and $n\geq1$:
\begin{equation}
\label{eq:ex:1-2}
\mu(t)=\frac{1}{n}\sum_{i=1}^n\tau(1-a_i^{-b_it}).
\end{equation}
\end{example}

We use $s(\cdot):\mathbb{R}\rightarrow\mathbb{R}$ as a typical sufficiently differentiable function in our definitions, which can be replaced by any other functions such as $\kappa(t)$ and $\mu(t)$. Define the $n\times n$-dimensional matrix-valued functional $\mathfrak{B}_n[s]$ acting on a $C^n$ function $s(t)$ by the following entrywise rule:
\begin{equation}
\label{eq:III-3}
\mathfrak{B}_n[s]_{ij}=\left\{\begin{array}{lcl}
B_{i-1,j-1}(r_1[s],\ldots,r_{i-j+1}[s]) & \hspace{2mm} & j\leq i\leq n \\
0 & \hspace{2mm} & \mathrm{otherwise} 
\end{array}\right.,
\end{equation}
with the functional
\begin{equation}
\label{eq:III-4}
r_{k+1}[s]=\frac{1}{\dot{s}^{k+1}(t)}\sum_{m=0}^k\frac{(-1)^{m}}{\dot{s}^{m}}R_{k,m}(s^{(2)},\ldots,s^{(k-m+2)}),
\end{equation}
where $R_{k,m}(s^{(2)},\ldots,s^{(k-m+2)})$ can be evaluated by the sum
\begin{equation}
\label{eq:III-5}
R_{k,m}=\frac{1}{m!}\sum\comb{k+m}{c_1,\ldots, c_m}\prod_{l=1}^{m}s^{(c_l)},\ \mathrm{s.t.}\ R_{0,0}=1,
\end{equation}
in which the sum is taken over all integers $c_l\geq2$, $l=1,\ldots,m$ satisfying $\sum_{l=1}^ic_l=k+m$.

In addition, define the $(n-1)$-dimensional vector-valued functional $\mathfrak{b}_{n-1}[s]$ as follows:
\begin{equation}
\label{eq:III-5}
\mathfrak{b}_{n-1}[s]=[\mathfrak{B}_{n}[s]_{n,1},\ \ldots,\ \mathfrak{B}_{n}[s]_{n(n-1)}]^T.
\end{equation}

\begin{example}
\label{ex:2}
It can be verified that $r_1[s(t)]=1/\dot{s}(t)$ and $r_2[s(t)]=-s^{(2)}(t)/\dot{s}^{3}(t)$. Thus, matrix $\mathfrak{B}_3$ and vector $\mathfrak{b}_2$ can be evaluated as:
\begin{equation}
\label{eq:ex:2-1}
\mathfrak{B}_3[s(t)]=\left[\begin{array}{ccc}
1 & 0 & 0 \\
0 & \frac{1}{\dot{s}(t)} & 0 \\
0 & -\frac{{s}^{(2)}(t)}{\dot{s}^3(t)} & \frac{1}{\dot{s}^2(t)}
\end{array}\right],
\end{equation}
\begin{equation}
\label{eq:ex:2-2}
\begin{split}
\mathfrak{b}_2[s(t)]&=[0,\ -s^{(2)}(t)/\dot{s}^3(t)]^T.
\end{split}
\end{equation}
Tables \ref{table:A1} and \ref{table:A2} present some evaluations of functionals $R_{i,j}$ and $r_j$, respectively, enough for obtaining $\mathfrak{B}_{n+1}$ and $\mathfrak{b}_n$ for $n=1,\ldots,5$, useful for systems up to fifth-order.
\end{example}
\begin{table}[h!]
\centering
\caption{Evaluation of $R_{j,i}(s^{(2)},\ldots,s^{(j-i+2)})$.}
\label{table:A1}
\resizebox{\columnwidth}{!}{\begin{tabular}{|c|ccccc|}
\hline
\backslashbox{$j$}{$i$} & $0$ & $1$ & $2$ & $3$ & $4$ \\
\hline
$0$ & $1$ & \multicolumn{1}{c}{} & \multicolumn{1}{c}{} & \multicolumn{1}{c}{} & \\
$1$ & $0$ & $s^{(2)}$ & \multicolumn{1}{c}{} & \multicolumn{1}{c}{} & \\
$2$ & $0$ & $s^{(3)}$ & $(s^{(2)})^2$ & \multicolumn{1}{c}{} & \\
$3$ & $0$ & $s^{(4)}$ & $10s^{(2)}s^{(3)}$ & $15(s^{(2)})^3$ & \\
$4$ & $0$ & $s^{(5)}$ & $15s^{(2)}s^{(4)}+10(s^{(3)})^2$ & $105(s^{(2)})^2s^{(3)}$ & $105(s^{(2)})^4$ \\
\hline
\end{tabular}}
\end{table}
\begin{table}[h!]
\centering
\caption{Evaluation of $r_{j}[s(t)]$.}
\label{table:A2}
\resizebox{\columnwidth}{!}{\begin{tabular}{|c|l|}
\hline
Functional & Value \\
\hline
$r_1[s]$ & $\frac{1}{\dot{s}}$ \\
$r_2[s]$ & $-\frac{s^{(2)}}{\dot{s}^{3}}$ \\
$r_3[s]$ & $-\frac{s^{(3)}}{\dot{s}^4}+\frac{(s^{(2)})^2}{\dot{s}^5}$ \\
$r_4[s]$ & $-\frac{s^{(4)}}{\dot{s}^5}
+\frac{10s^{(2)}s^{(3)}}{\dot{s}^6}
-\frac{15(s^{(2)})^3}{\dot{s}^7}$ \\
$r_5[s]$ & $-\frac{s^{(5)}}{\dot{s}^6}
+\frac{15s^{(2)}s^{(4)}+10(s^{(3)})^2}{\dot{s}^7}
-\frac{105(s^{(2)})^2s^{(3)}}{\dot{s}^8}
+\frac{105(s^{(2)})^4}{\dot{s}^9}$ \\
\hline
\end{tabular}}
\end{table}

\section{Main Results}
\label{sec:III}

Consider the following theorem as the central result of this letter:

\begin{theorem}
\label{th:1}
Suppose that Assumption \ref{ass:1} holds. Let $t_0$ denote the initial time and $\Delta t=t-t_0$. Also, let $\mu\in\mathcal{M}(\tau)$, $\kappa=\mu^{-1}\in\mathcal{K}(\tau)$, and $\pi_0(\cdot,\cdot):\mathbb{R}^n\times[0,\infty)\rightarrow\mathbb{R}$ be sufficiently differentiable user-defined functions by which $\pi(\cdot,\cdot,\cdot):\mathbb{R}^n\times[0,\infty)\times[0,\infty)\rightarrow\mathbb{R}$ is defined as follows:
\begin{equation}
\label{eq:th:1-1}
\pi(x,t,\tau)=\frac{\dot{\kappa}^n(\Delta t)}{g(x,t)}\left(\pi_0(x_{\kappa},t_{\kappa})-\mathfrak{b}_{n}^T[\kappa(\Delta t)]x\right),
\end{equation}
where
\begin{equation}
\label{eq:th:1-2}
x_{\kappa}=\mathfrak{B}_n[\kappa(\Delta t)]x,
\end{equation}
\begin{equation}
\label{eq:th:1-3}
t_{\kappa}=\kappa(\Delta t)+t_0.
\end{equation}
Then, the solution of system \eqref{eq:1} under $u=\pi(x,t,\tau)$ for every $f(x,u,t)\in\mathcal{F}(x,u,t)$ and $\tau\in\mathcal{T}$ with initial condition $x(t_0)$ satisfies $h(x,u,t)\in\mathcal{H}(t)$ for all $t\in[t_0,t_0+\tau)$ if and only if the solution of the following system:
\begin{equation}
\label{eq:th:1-4}
\left\{
\begin{array}{lcl}
\dot{\xi}_i&=&\xi_{i+1};\hspace{4mm} i=1,2,\ldots,n-1 \\
\dot{\xi}_n&=&\dot{\mu}^n(\Delta t)f_{\mu}(\xi,t)+\pi_0(\xi,t)
\end{array}\right.
\end{equation}
for every $f_{\mu}(\xi,t)\in\mathcal{F}(\xi_{\mu},u_{\mu},t_{\mu})$ and $\tau\in\mathcal{T}$ with initial condition $\xi(t_0)=\mathfrak{B}_n[\kappa(\Delta t)]_{t=t_0}x(t_0)$ satisfies: 
\begin{equation}
\label{eq:th:1-5}
h(\xi_{\mu},u_{\mu},t_{\mu})\in\mathcal{H}(t_{\mu})
\end{equation}
for all $t\in[t_0,\infty)$, where 
\begin{equation}
\label{eq:th:1-6}
\xi_{\mu}=\mathfrak{B}_n[\mu(\Delta t)]\xi,
\end{equation}
\begin{equation}
\label{eq:th:1-7}
u_{\mu}=\frac{1}{g(\xi_\mu,t_\mu)}\left(\frac{\pi_0(\xi,t)}{\dot{\mu}^n(\Delta t)}+\mathfrak{b}_{n}^T[\mu(\Delta t)]\xi\right),
\end{equation}
\begin{equation}
\label{eq:th:1-8}
t_{\mu}=\mu(\Delta t)+t_0.
\end{equation}
\end{theorem}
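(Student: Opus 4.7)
The plan is to introduce the time-scale bijection $\sigma-t_0=\kappa(\Delta t)$ (equivalently $\Delta t=\mu(\sigma-t_0)$), which by $\kappa\in\mathcal{K}(\tau)$ maps $[0,\tau)$ onto $[0,\infty)$, and to verify that the closed-loop trajectories of \eqref{eq:1} under $u=\pi(x,t,\tau)$ and of \eqref{eq:th:1-4} correspond through $\xi(\sigma):=\mathfrak{B}_n[\kappa(\Delta t)]\,x(t)$ with $\Delta t=\mu(\sigma-t_0)$. Since this correspondence will be shown to be two-way and to preserve the quantity $h$ along matched trajectories (with $\mathcal{H}(t)$ identified with $\mathcal{H}(t_\mu)$ under $t\leftrightarrow t_\mu$), both the ``if'' and the ``only if'' direction, together with the constraint equivalence, follow at once.

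The first step is the combinatorial identity $r_k[\kappa]=\mu^{(k)}(\kappa(\Delta t))$, i.e., that the functional defined in \eqref{eq:III-4} is exactly the $k$-th derivative of the inverse $\mu=\kappa^{-1}$ evaluated at $\kappa(\Delta t)$. The base case $r_1=1/\dot\kappa$ is the inverse-function theorem, and an induction on $k$---obtained by differentiating $\mu(\kappa(\Delta t))=\Delta t$ repeatedly and collecting terms via the multinomial partition sum in \eqref{eq:III-5}---recovers the general formula. With this in hand, $\mathfrak{B}_n[\kappa]_{ij}$ is recognized as the partial-Bell entry $B_{i-1,j-1}(\dot\mu(\kappa),\ldots,\mu^{(i-j+1)}(\kappa))$ supplied by Fa\`a di Bruno's formula applied to $x_1\circ t$ with $t(\sigma)=\mu(\sigma-t_0)+t_0$; using $x_1^{(k)}(t)=x_{k+1}(t)$ from the integrator chain together with $B_{i-1,0}=0$ yields
\begin{equation*}
\xi_i(\sigma)=(\mathfrak{B}_n[\kappa]x)_i=\frac{d^{i-1}x_1}{d\sigma^{i-1}},\qquad i=1,\ldots,n.
\end{equation*}
Consequently $d\xi_i/d\sigma=\xi_{i+1}$ for $i<n$, reproducing the cascade in \eqref{eq:th:1-4}.

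For the terminal coordinate, Fa\`a di Bruno once more gives
\begin{equation*}
\frac{d\xi_n}{d\sigma}=B_{n,n}(\dot\mu)\,x_1^{(n)}(t)+\sum_{k=1}^{n-1}B_{n,k}(\dot\mu,\ldots)\,x_{k+1}(t),
\end{equation*}
where $B_{n,n}(s_1)=s_1^n$ makes $B_{n,n}(\dot\mu)=\dot\mu^n$, and the residual sum is precisely $\mathfrak{b}_n^T[\kappa(\Delta t)]\,x$ by definition \eqref{eq:III-5}. Substituting $x_1^{(n)}(t)=f(x,u,t)+g(x,t)u$ together with $u=\pi(x,t,\tau)$ from \eqref{eq:th:1-1}, the identity $\dot\kappa\dot\mu=1$ reduces $\dot\mu^n g u$ to $\pi_0(\xi,\sigma)-\mathfrak{b}_n^T[\kappa]x$, whereupon the $\mathfrak{b}_n^T[\kappa]x$ contributions cancel and leave $d\xi_n/d\sigma=\dot\mu^n f+\pi_0(\xi,\sigma)$, which is \eqref{eq:th:1-4} under $x=\xi_\mu$, $u=u_\mu$, $t=t_\mu$.

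To close the proof I would invoke the Riordan/Bell-matrix fact that $s\mapsto\mathfrak{B}_n[s]$ is multiplicative with respect to composition of formal diffeomorphisms, so $\mathfrak{B}_n[\mu]\,\mathfrak{B}_n[\kappa]=\mathbb{I}_n$ at matched times; this gives $x=\mathfrak{B}_n[\mu]\xi=\xi_\mu$, and a short algebraic manipulation of \eqref{eq:th:1-1} using $\dot\kappa\dot\mu=1$ verifies $\pi(x,t,\tau)=u_\mu$ as given in \eqref{eq:th:1-7}. With $t=t_\mu$ as well, we obtain $h(x,u,t)=h(\xi_\mu,u_\mu,t_\mu)$ and $\mathcal{H}(t)=\mathcal{H}(t_\mu)$ along corresponding trajectories, and because $t\in[t_0,t_0+\tau)$ iff $\sigma\in[t_0,\infty)$, the constraint equivalence follows in both directions. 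The main obstacle will be the combinatorial identification $r_k[\kappa]=\mu^{(k)}$ together with the inversion $\mathfrak{B}_n[\mu]\,\mathfrak{B}_n[\kappa]=\mathbb{I}_n$; once these two algebraic facts are pinned down, the rest is a controlled unwinding of Fa\`a di Bruno's formula and the definitions of $\mathfrak{B}_n$ and $\mathfrak{b}_n$.
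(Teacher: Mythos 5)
Your proposal is correct and follows essentially the same route as the paper: a time-scale bijection built from $\kappa$ and $\mu=\kappa^{-1}$, the inverse-derivative identity $r_k[\kappa]=\mu^{(k)}$ (the paper's Lemma~\ref{lemma:inv}), and Fa\`a di Bruno/Bell-polynomial bookkeeping showing that $\mathfrak{B}_n$ maps closed-loop trajectories of \eqref{eq:1} under $u=\pi$ onto trajectories of \eqref{eq:th:1-4} and back, with the cancellation of the $\mathfrak{b}_n^T[\kappa]x$ terms and the identity $\dot\mu^n\mathfrak{b}_n^T[\mu]\xi=-\mathfrak{b}_n^T[\kappa]x$ giving $u=u_\mu$ and hence the two-way constraint equivalence. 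The only organizational difference is that you transform the disturbed closed loop directly from the finite to the infinite horizon and appeal to Bell-matrix multiplicativity for $\mathfrak{B}_n[\mu]\mathfrak{B}_n[\kappa]=\mathbb{I}_n$, whereas the paper derives both transformation directions separately (its Lemma~\ref{lemma:3}) and adds the disturbance afterwards.
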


The following corollary elaborates on the application of Theorem \ref{th:1} to PTC design with a state-constrained performance objective.

\begin{corollary}
\label{corollary:2}
Suppose that Assumption \ref{ass:1} holds. Let $t_0$ denote the initial time, $\Delta t=t-t_0$, $\mu\in\mathcal{M}(\tau)$, and $\kappa=\mu^{-1}\in\mathcal{K}(\tau)$. Also, let $\zeta(\cdot):[0,\tau)\rightarrow[0,\infty)$ be a user-defined function of time. Then, for every $f(x,u,t)\in\mathcal{F}(x,u,t)$ and $\tau\in\mathcal{T}$ the solution of system \eqref{eq:1} under $u=\pi(x,t,\tau)$ satisfies $\|x(t)\|\leq\sigma\|x(t_0)\|\zeta(\Delta t)$ for some $\sigma\geq1$ at all $t\in[t_0,t_0+\tau)$ if and only if for every $f_{\mu}(\xi,t)\in\mathcal{F}(\xi_{\mu},u_{\mu},t_{\mu})$ and $\tau\in\mathcal{T}$ the solution of system \eqref{eq:th:1-4} satisfies:
\begin{equation}
\label{eq:cor:2-1}
\left\|\mathfrak{B}_n[\mu(\Delta t)]\xi(t)\right\|\leq\sigma\|\mathfrak{B}_n[\mu(\Delta t)]_{t=t_0}\xi(t_0)\|\zeta(\mu(\Delta t))
\end{equation}
for some $\sigma\geq1$ at all $t\in[t_0,\infty)$.
\end{corollary}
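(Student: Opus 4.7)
My plan is to derive the corollary directly from Theorem~\ref{th:1} by specializing the performance function $h$ and the admissible set $\mathcal{H}$. Take $m=n$ with $h(x,u,t)=x$ and
\[
\mathcal{H}(t)=\{z\in\mathbb{R}^n:\|z\|\leq\sigma\|x(t_0)\|\zeta(\Delta t)\}.
\]
Under this choice, the general constraint $h(x,u,t)\in\mathcal{H}(t)$ for $t\in[t_0,t_0+\tau)$ coincides verbatim with the trajectory bound $\|x(t)\|\leq\sigma\|x(t_0)\|\zeta(\Delta t)$ that appears in the hypothesis of the corollary. Since $\pi(x,t,\tau)$ is the same controller as in Theorem~\ref{th:1}, the theorem applies without modification.

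Invoking Theorem~\ref{th:1}, the above is equivalent to $h(\xi_\mu,u_\mu,t_\mu)\in\mathcal{H}(t_\mu)$ for all $t\in[t_0,\infty)$ along the solution of \eqref{eq:th:1-4}. Substituting \eqref{eq:th:1-6} and \eqref{eq:th:1-8}, and noting $t_\mu-t_0=\mu(\Delta t)$, the transported condition reads
\[
\|\mathfrak{B}_n[\mu(\Delta t)]\xi(t)\|\leq\sigma\|x(t_0)\|\zeta(\mu(\Delta t)).
\]
It then remains to re-express $\|x(t_0)\|$ in terms of $\xi(t_0)$. Since the initial-condition prescription of Theorem~\ref{th:1} gives $\xi(t_0)=\mathfrak{B}_n[\kappa(\Delta t)]_{t=t_0}x(t_0)$, the corollary would follow at once from the algebraic identity
\[
\mathfrak{B}_n[\mu(\Delta t)]_{t=t_0}\,\mathfrak{B}_n[\kappa(\Delta t)]_{t=t_0}=\mathbb{I}_n,
\]
which would let us write $\|x(t_0)\|=\|\mathfrak{B}_n[\mu(\Delta t)]_{t=t_0}\xi(t_0)\|$ and recover exactly \eqref{eq:cor:2-1}.

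The main obstacle is precisely verifying this inversion identity for $\mu=\kappa^{-1}$. It should follow from Fa\`a di Bruno's formula applied to $\kappa\circ\mu=\mathrm{id}$: differentiating this identity $k$ times and grouping terms by partitions of $\{1,\ldots,k\}$ produces a linear relation between the Bell-polynomial entries of $\mathfrak{B}_n[\mu]$ and $\mathfrak{B}_n[\kappa]$ that is exactly the $k$th row of the matrix product, with right-hand side $\delta_{k,1}$. A quick sanity check using the explicit $\mathfrak{B}_3$ of Example~\ref{ex:2} and the identities $\dot\kappa=1/\dot\mu$, $\ddot\kappa=-\ddot\mu/\dot\mu^3$ confirms the cancellation. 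Because every substitution used above is reversible---the selection of $h$ and $\mathcal{H}$, the application of Theorem~\ref{th:1}, and the norm-preserving linear change between $x(t_0)$ and $\xi(t_0)$---both directions of the biconditional are obtained simultaneously, which completes the argument.
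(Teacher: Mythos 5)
Your proof is correct and takes essentially the same route as the paper: specialize Theorem~\ref{th:1} with $h=x$ (the paper uses the scalar $h=\|x\|$) and $\mathcal{H}(t)$ the set of points bounded in norm by $\sigma\|x(t_0)\|\zeta(\Delta t)$, then translate the transported constraint using $t_\mu-t_0=\mu(\Delta t)$. The one step you spell out that the paper leaves implicit---rewriting $\|x(t_0)\|$ as $\|\mathfrak{B}_n[\mu(\Delta t)]_{t=t_0}\xi(t_0)\|$ via the inversion identity $\mathfrak{B}_n[\mu]_{t=t_0}\mathfrak{B}_n[\kappa]_{t=t_0}=\mathbb{I}_n$ for $\mu=\kappa^{-1}$---is genuine and correctly justified (it is the composition property of Bell matrices already underlying Lemma~\ref{lemma:3}), so no gap remains.
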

\begin{proof}
Given Theorem \ref{th:1}, one may consider $h(x)=\|x\|$ and $\mathcal{H}(t)=\{h:0\leq h\leq\sigma\|x(t_0)\|\zeta(\Delta t)\}$ which means that $\mathcal{H}(t_{\mu})=\mathcal{H}(\mu(\Delta t)+t_0)=\{h:0\leq h\leq\sigma\|x(t_0)\|\zeta(\mu(\Delta t)\}$. Therefore, according to \eqref{eq:th:1-5}, one may find $\|\xi_{\mu}\|\in\mathcal{H}(t_{\mu})$ as a necessary and sufficient condition, that is inequality \eqref{eq:cor:2-1}. 
\end{proof}

For PTC design subject to a time-varying constraint on the control input, Theorem \ref{th:1} reduces to the following corollary.

\begin{corollary}
\label{corollary:3}
Suppose that Assumption \ref{ass:1} holds. Let $t_0$ denote the initial time, $\Delta t=t-t_0$, $\mu\in\mathcal{M}(\tau)$, and $\kappa=\mu^{-1}\in\mathcal{K}(\tau)$. Also, let $\upsilon(\cdot):[0,\tau)\rightarrow[0,\infty)$ be a user-defined function of time. Then, for every $f(x,u,t)\in\mathcal{F}(x,u,t)$ and $\tau\in\mathcal{T}$ the solution of system \eqref{eq:1} under $u=\pi(x,t,\tau)$ satisfies $|u|\leq\upsilon(\Delta t)$ at all $t\in[t_0,t_0+\tau)$ if and only if for every $f_{\mu}(\xi,t)\in\mathcal{F}(\xi_{\mu},u_{\mu},t_{\mu})$ and $\tau\in\mathcal{T}$ the solution of system \eqref{eq:th:1-4} satisfies the following condition at all $t\in[t_0,\infty)$:
\begin{equation}
\label{eq:cor:3-1}
\left|\frac{\pi_0(\xi,t)}{\dot{\mu}^n(\Delta t)}+\mathfrak{b}_{n}^T[\mu(\Delta t)]\xi\right|\leq\upsilon(\mu(\Delta t)).
\end{equation}
\end{corollary}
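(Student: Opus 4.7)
The plan is to mimic, almost verbatim, the short argument already used for Corollary \ref{corollary:2}: I specialize the abstract constraint descriptor of Theorem \ref{th:1} to capture a pure input-magnitude bound, and then read off what the theorem's equivalence says in the infinite-time coordinates.

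Concretely, I would choose the constraint function $h(x,u,t)=u$ (the scalar control value) and the time-varying target set $\mathcal{H}(t)=\{h\in\mathbb{R}:|h|\le\upsilon(t-t_0)\}$. With these selections, the abstract condition $h(x,u,t)\in\mathcal{H}(t)$ for all $t\in[t_0,t_0+\tau)$ is literally the saturation requirement $|u|\le\upsilon(\Delta t)$ we want to enforce on the prescribed-time closed-loop, and $\mathcal{H}$ remains a set-valued function of $t$ alone, as Theorem \ref{th:1} demands.

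Next I would invoke the equivalence \eqref{eq:th:1-5}, which turns the prescribed-time requirement into $h(\xi_\mu,u_\mu,t_\mu)\in\mathcal{H}(t_\mu)$ at every $t\in[t_0,\infty)$. Using \eqref{eq:th:1-8} to substitute $t_\mu-t_0=\mu(\Delta t)$, the specialization of $h$ and $\mathcal{H}$ collapses this to the scalar inequality $|u_\mu|\le\upsilon(\mu(\Delta t))$. The final step is to eliminate $u_\mu$ by the explicit formula \eqref{eq:th:1-7}, whereupon the common factor $g(\xi_\mu,t_\mu)$ (nonzero by hypothesis) can be moved to the right-hand side, delivering exactly \eqref{eq:cor:3-1} and establishing the equivalence in both directions.

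Because Theorem \ref{th:1} has already absorbed all the heavy lifting (the Bell-polynomial state reconstruction, the differentiability bookkeeping, and the quantifiers over $f\in\mathcal{F}$ and $\tau\in\mathcal{T}$), I do not expect any genuine obstacle here; the proof is, in effect, a two-line specialization followed by a substitution. The only point that requires some care is keeping the argument of $\upsilon$ consistent under the change of time scale, i.e. recognizing that the original "elapsed-time" variable $\Delta t$ in $\upsilon(\Delta t)$ becomes $\mu(\Delta t)$ after the mapping, which is exactly why $\upsilon$ appears composed with $\mu$ in \eqref{eq:cor:3-1}.
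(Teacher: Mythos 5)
Your proposal is correct and takes essentially the same route as the paper, whose entire proof is the same two-line specialization: pick $h(u)=|u|$ (your choice $h=u$ with $\mathcal{H}(t)=\{h:|h|\leq\upsilon(\Delta t)\}$ is equivalent), invoke Theorem \ref{th:1}, and substitute $u_\mu$ from \eqref{eq:th:1-7}. The only point to watch is your last step: moving the nonzero gain to the right-hand side actually gives $\left|\pi_0(\xi,t)/\dot{\mu}^n(\Delta t)+\mathfrak{b}_{n}^T[\mu(\Delta t)]\xi\right|\leq|g(\xi_\mu,t_\mu)|\,\upsilon(\mu(\Delta t))$, which coincides with \eqref{eq:cor:3-1} exactly only under the unit-gain normalization $g\equiv1$ adopted without loss of generality in the paper's proof of Theorem \ref{th:1} --- a wrinkle in the paper's own statement rather than a gap in your argument.
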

\begin{proof}
Consider $h(u)=|u|$ and define $\mathcal{H}(t)=\{h:0\leq h\leq\upsilon(\Delta t)\}$. The rest is similar to the proof of Corollary \ref{corollary:2}.
\end{proof}

In the following, we discuss the application of Theorem \ref{th:1} to prescribed-time stability/attractivity analysis.

\begin{definition}
\label{def:PTS}
System \eqref{eq:1} under controller $u=\pi(x,t,\tau)$ is called
\begin{enumerate}
\item \textit{globally practically prescribed-time attractive at every $\tau\in\mathcal{T}$ with error $\varsigma>0$}, if for every $\tau\in\mathcal{T}$ and $x(t_0)\in\mathbb{R}^n$ the solution of the system satisfies $\lim_{t\rightarrow t_0+\tau^-}\|x(t)\|\leq\varsigma$.
\item \textit{globally (uniformly) practically prescribed-time stable} if it is globally (uniformly) stable in the sense of Lyapunov and globally practically prescribed-time attractive. 
\end{enumerate}
\end{definition}

\begin{corollary}
\label{corollary:1}
Suppose that Assumption \ref{ass:1} holds. Let $t_0$ denote the initial time, $\Delta t=t-t_0$, $\mu\in\mathcal{M}(\tau)$, and $\kappa=\mu^{-1}\in\mathcal{K}(\tau)$. The solution of system \eqref{eq:1} under $u=\pi(x,t,\tau)$ for every $f(x,u,t)\in\mathcal{F}(x,u,t)$ and $\tau\in\mathcal{T}$ is
\begin{enumerate}
\item globally (uniformly) stable if and only if for every $f_{\mu}(\xi,t)\in\mathcal{F}(\xi_{\mu},u_{\mu},t_{\mu})$ and $\tau\in\mathcal{T}$ system \eqref{eq:th:1-4} is globally (uniformly) stable and for every $\xi(0)\in\mathbb{R}^n$ there exists $\sigma(\xi(0))>0$ such that the solution of system \eqref{eq:th:1-4} satisfies $\left\|\mathfrak{B}_n[\mu(\Delta t)]\xi(t)\right\|\leq\sigma(\xi(0))$ for all $t\in[t_0,\infty)$.
\item globally practically prescribed-time attractive at every $\tau\in\mathcal{T}$ with error $\varsigma>0$ if and only if for every $f_{\mu}(\xi,t)\in\mathcal{F}(\xi_{\mu},u_{\mu},t_{\mu})$ and $\tau\in\mathcal{T}$ the solution of system \eqref{eq:th:1-4} satisfies:
\begin{equation}
\label{eq:cor:1-1}
\lim_{t\rightarrow\infty}\left\|\mathfrak{B}_n[\mu(\Delta t)]\xi(t)\right\|\leq\varsigma.
\end{equation}
\item globally (uniformly) practically prescribed-time stable if and only if items 1 and 2 are met. 
\end{enumerate}
\end{corollary}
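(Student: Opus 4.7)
The plan is to reduce each of the three items to Theorem~\ref{th:1} by choosing appropriate candidate constraint functions $h$ and time-varying admissibility sets $\mathcal{H}$, exactly in the spirit of the proofs of Corollaries~\ref{corollary:2} and \ref{corollary:3}. The overarching observation I would rely on is that Theorem~\ref{th:1} furnishes a correspondence between the trajectory $x(\cdot)$ of the prescribed-time system on $[t_0,t_0+\tau)$ and the trajectory $\xi(\cdot)$ of the associated infinite-time system \eqref{eq:th:1-4} on $[t_0,\infty)$: the value of $x$ at real time $t$ is identified with $\xi_\mu=\mathfrak{B}_n[\mu(\Delta t)]\xi$ evaluated at the reparametrized time $t_\mu=\mu(\Delta t)+t_0$, the endpoint $t\to t_0+\tau^-$ corresponds to $t\to\infty$, and the two initial data are linked through the invertible linear map $\mathfrak{B}_n[\kappa(\Delta t)]_{t=t_0}$.

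For item~2 I would take $h(x,u,t)=\|x\|$ with $\mathcal{H}(t)=[0,\infty)$, so that the premise of Theorem~\ref{th:1} is vacuously met; the identification $\|x(t)\|=\|\xi_\mu(s)\|$, where $s$ is the preimage of $t$ under $t_\mu$, then gives $\lim_{t\to t_0+\tau^-}\|x(t)\|=\lim_{s\to\infty}\|\mathfrak{B}_n[\mu(\Delta s)]\xi(s)\|$. Hence the global practical prescribed-time attractivity demanded by Definition~\ref{def:PTS} is equivalent to \eqref{eq:cor:1-1}.

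For item~1 I would rewrite Lyapunov stability through its $\varepsilon$--$\delta$ characterization and apply Theorem~\ref{th:1} with $h=\|x\|$ and $\mathcal{H}(t)=\{h:h\le\varepsilon\}$; the resulting equivalence reads $\|\mathfrak{B}_n[\mu(\Delta t)]\xi(t)\|\le\varepsilon$ for all $t\ge t_0$. Combined with the equivalence of smallness of $x(t_0)$ and smallness of $\xi(t_0)$ (by invertibility of $\mathfrak{B}_n[\kappa]_{t=t_0}$), this produces stability in the original coordinates iff the analogous property holds for $\mathfrak{B}_n[\mu]\xi$. I would then emphasize that, because the entries of $\mathfrak{B}_n[\mu]$ diverge as $t\to\infty$ (they contain factors of the form $1/\dot\mu^k$, while $\dot\mu\to 0$), stability of $\xi$ alone is insufficient, and the a~priori bound $\|\mathfrak{B}_n[\mu(\Delta t)]\xi(t)\|\le\sigma(\xi(0))$ is precisely the supplement that makes the two statements coincide. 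Uniformity in $t_0$ transfers verbatim since $\mu$ and $\kappa$ depend only on $\Delta t$. Item~3 is then immediate from items~1 and~2 together with Definition~\ref{def:PTS}.

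The step I anticipate to be the main obstacle is item~1: cleanly separating Lyapunov stability of $\xi$ in the infinite-time frame from the boundedness of its image $\mathfrak{B}_n[\mu]\xi$ (which is what truly encodes stability of $x$ in the prescribed-time frame), and verifying that both are independently needed and jointly sufficient. The other items reduce essentially to bookkeeping about the time reparametrization and a transparent specialization of Theorem~\ref{th:1}.
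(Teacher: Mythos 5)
Your proposal follows essentially the same route as the paper's proof: specialize Theorem~\ref{th:1} with $h(x)=x$ (resp.\ $\|x\|$) and suitable sets $\mathcal{H}$, relate the initial conditions through the invertible (positive-definite) matrix $\mathfrak{B}_n[\kappa(\Delta t)]_{t=t_0}$, translate the limit $t\to t_0+\tau^-$ into $t\to\infty$ for item~2, obtain uniformity from the fact that $\pi$, $\mu$, $\kappa$ depend only on $\Delta t$, and get item~3 directly from Definition~\ref{def:PTS}. The step you flag as the main obstacle in item~1 --- separating stability of $\xi$ from the per-initial-condition boundedness of $\mathfrak{B}_n[\mu(\Delta t)]\xi$ and showing they are jointly equivalent to stability in the $x$-frame --- is treated no more rigorously in the paper, which essentially asserts it, so your argument is at the level of the published one.
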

\begin{proof}
Item 1: Consider $h(x)=x$ and define the set $\mathcal{H}=\{h:\|h\|<\delta\}$. One can verify that according to Theorem \ref{th:1}, if for every $f_{\mu}(\xi,t)\in\mathcal{F}(\xi_{\mu},u_{\mu},t_{\mu})$ and $\tau\in\mathcal{T}$ system \eqref{eq:th:1-4} satisfies $h(\xi_\mu)=\mathfrak{B}_n[\mu(\Delta t)]\xi\in\mathcal{H}$ for $\xi(t_0)=\mathfrak{B}_n[\kappa(\Delta t)]_{t=t_0}x_0$, then for every $f(x,u,t)\in\mathcal{F}(x,u,t)$ and $\tau\in\mathcal{T}$ system \eqref{eq:1} under $u=\pi(x,t,\tau)$ satisfies $h(x)\in\mathcal{H}$ but for $x(t_0)=x_0$. Therefore, under the conditions described above, since $\mathfrak{B}_n[\kappa(\Delta t)]_{t=t_0}$ is positive-definite and $\mathfrak{B}_n[\mu(\Delta t)]\xi$ is bounded, if for every $\delta>0$ there exists $\varepsilon>0$ such that the solution of system \eqref{eq:th:1-4} satisfies $\|\xi(t_0)\|=\|\mathfrak{B}_n[\kappa(\Delta t)]_{t=t_0}x_0\|<\varepsilon\Rightarrow h(\xi_\mu)\in\mathcal{H}$ (which means that system \eqref{eq:th:1-4} is stable), then for every $\delta$ there exists $\epsilon>0$ such that the solution of system \eqref{eq:1} satisfies $\|x(t_0)\|=\|x_0\|<\epsilon\Rightarrow h(x)\in\mathcal{H}$ (which means that system \eqref{eq:1} is stable). The uniformity can be concluded by observing the fact that $\pi(x,t,\tau)$ only depends on $\Delta t$. Item 2: The practical prescribed-time attractivity can be proved if one considers $h(x)=x$ and $\mathcal{H}(t)=\{h:\ \lim_{t\rightarrow t_0+\tau^-}\|h\|\leq\varsigma\}$. To satisfy $h(x)\in\mathcal{H}(t)$ for system \eqref{eq:1} under $u=\pi(x,t,\tau)$, system \eqref{eq:th:1-4} should satisfy $h(\xi_\mu)=\mathfrak{B}_n[\mu(\Delta t)]\xi\in\mathcal{H}(t_\mu)=\{h:\ \lim_{t_{\mu}\rightarrow t_0+\tau^-}\|h\|\leq\varsigma\}=\{h:\ \lim_{\mu(\Delta t)\rightarrow \tau^-}\|h\|\leq\varsigma\}=\{h:\ \lim_{t\rightarrow \infty}\|h\|\leq\varsigma\}$, that is equivalent to \eqref{eq:cor:1-1}. Item 3: The proof of this item is a direct result of Definition \ref{def:PTS}. 
\end{proof}

\begin{remark}
\label{rem:1}
Many PTCs proposed in the literature can be derived from Theorem \ref{th:1}\footnote{MATLAB\textsuperscript{\tiny\textregistered} codes and Simulink\textsuperscript{\tiny\textregistered} models for some PTCs can be found in \href{https://github.com/a-shakouri/prescribed-time-control}{https://github.com/a-shakouri/prescribed-time-control}}. For instance, the PTC with linear decay that is studied in \cite{shakouri2021prescribed} can be obtained by considering an exponential mapping function and implementing the conditions of Corollary \ref{corollary:2} with $\zeta(\Delta t)=1-\Delta t/\tau$. Other prescribed-time stable controllers can be obtained by the use of Corollary \ref{corollary:1} with different mapping functions to handle various disturbances with different growth rates. 
\end{remark}

\begin{remark}
\label{rem:1p}
A practically prescribed-time attractive system is just prescribed-time attractive when the error is zero $\varsigma=0$ (see \cite[Definition 3]{shakouri2021prescribedeul}). In this case, the singularity at $t=t_0+\tau$ is an inevitable problem for many PTCs. However, since the error $\varsigma$ is also a user-defined parameter in most cases, the practical PTC does not compromise the advantages of an ideal PTC in the application, while it does not suffer from the singularity at the convergence moment.
\end{remark}

Theorem \ref{th:1} and its corollaries lead to a process for PTC design, which is summarized in the following algorithm:
\begin{alg}
\label{alg:1}
Let Assumption \ref{ass:1} is satisfied. Identify the dimension of the system $n$ and the set $\mathcal{F}$. Reformulate the control objectives of the system (such as state/input/output constraints) as defined in \eqref{eq:prob:1-1} in terms of function $h$ and set $\mathcal{H}$. Do the following steps:
\begin{enumerate}
    \item Select a class $\mathcal{M}$ function $\mu$.
    \item According to the system dimension $n$ and function $\mu$, obtain the functionals $\frak{B}_n[s]$ and $\frak{b}_n[s]$ as described in \eqref{eq:III-3} and \eqref{eq:III-5}. 
    \item Design an infinite-time controller $\pi_0(\xi,t)$ such that system \eqref{eq:th:1-4} satisfies the control objective \eqref{eq:th:1-5} for every $f_\mu(\xi,t)\in\mathcal{F}(\xi_\mu,u_\mu,t_\mu)$. If unsuccessful, go back to step 1 and select a different class $\mathcal{M}$ function or tune its growth rate.
    \item Obtain $\kappa=\mu^{-1}$ and construct the PTC as described in \eqref{eq:th:1-1}--\eqref{eq:th:1-3}. 
\end{enumerate}
\end{alg}

The following example demonstrates how Algorithm \ref{alg:1} can be implemented to a simple system.
\begin{example}
\label{ex:4}
Consider a scalar first-order system $\dot{x}=a(x,u,t)+bu+gu$. Function $a(x,u,t)$ is an unknown non-vanishing but bounded disturbance. Constant $b$ is unknown and satisfies $|b|<|g|$. In terms of system \eqref{eq:1}, we have $f(x,u,t)=a(x,u,t)+bu$, and $\mathcal{F}$ contains all functions $a(x,u,t)+bu$ such that the conditions on $a(x,u,t)$ and $b$ are met. The control objectives are $|x(t)|\leq\phi|x(0)|$, $\phi>1$, and $\lim_{t\rightarrow\tau}|x(t)|=0$, which can be formulated as $h=|x(t)|$ and $\mathcal{H}=\{h:\ h\leq\phi|x(0)|,\ \lim_{t\rightarrow\tau}h\leq\sigma\}$. As the system is scalar, we have $n=1$, $\frak{B}_1=1$, and $\frak{b}_1=0$. We choose $\kappa=-\ln\left(1-t/\tau\right)$, and correspondingly $\mu=\tau(1-\exp(-t))$. Hence, system \eqref{eq:th:1-4} reduces to $\dot{\xi}=\tau \exp(-t)a+(b/g+1)\pi_0$ (which has a vanishing uncertainty and an always positive control gain). We choose an infinite-time controller as $\pi_0(\xi)=-\frac{\psi\xi}{(|\xi|-\phi|\xi(0)|)^2}$, $\psi>0$ that satisfies $|\xi(t)|\leq\phi|\xi(0)|$ and $\lim_{t\rightarrow\infty}\xi(t)=0$. Therefore, a PTC as $\pi(x,t)=-\frac{\psi x}{g(\tau-t)(|x|-\phi|x(0)|)^2}$ satisfies the control objectives of the main system. Fig. \ref{fig:1} shows the simulation results for $a(x,u,t)=0.1-t^3\exp(-t)\sin(x/(u+0.001))$, $g=1$, $b=-0.5$, $\phi=1.1$, $\psi=20$, and different values of convergence time $\tau$ corresponding to initial condition $x(0)=1$. 
\end{example}

\begin{figure}[!h]
\centering\includegraphics[width=1\linewidth]{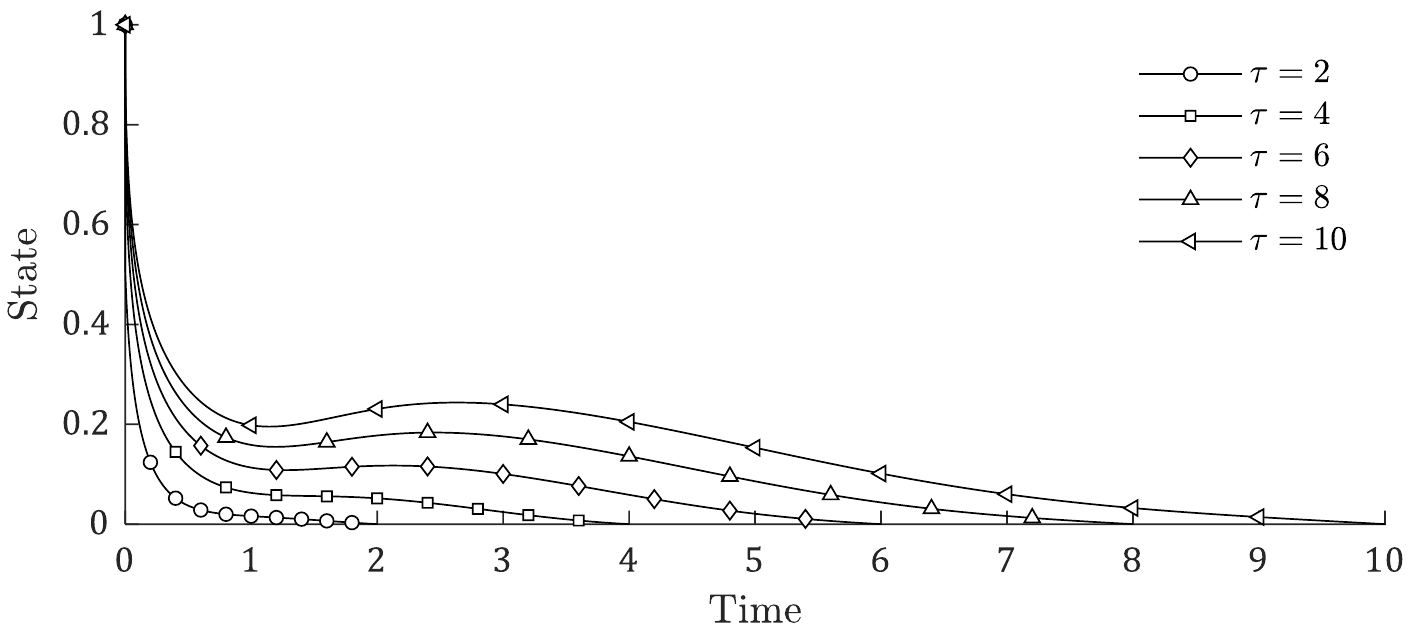}
\caption{Simulation results for Example \ref{ex:4}.}
\label{fig:1}
\end{figure}

\section{Proof of Theorem \ref{th:1}}
\label{sec:IV}

Without loss of generality, we prove Theorem \ref{th:1} for the case of unity input gain $g(x,t)=1$. In the first step, we deal with the nominal system that is a chain of integrators. Next, we find how the unmodeled term, $f(x,u,t)$, affects the obtained results. 

The following lemmas can be evaluated after considering Definitions \ref{def:1} and \ref{def:2}:

\begin{lemma}[see \cite{shakouri2021prescribedeul}]
\label{lemma:1}
The following statements hold for any functions $\kappa\in\mathcal{K}(\tau)$ and $\mu=\kappa^{-1}\in\mathcal{M}(\tau)$:
\begin{enumerate}
\item $\dot{\kappa}(\cdot):[0,\tau)\rightarrow[0,\infty)$, $\ddot{\kappa}(\cdot):[0,\tau)\rightarrow[0,\infty)$, and $\lim_{t\rightarrow\tau^-}\dot{\kappa}(t)=\lim_{t\rightarrow\tau^-}\ddot{\kappa}(t)=\infty$.
\item $\dot{\mu}(\cdot):[0,\infty)\rightarrow[0,\infty)$, $\ddot{\mu}(\cdot):[0,\infty)\rightarrow(-\infty,0]$, and $\lim_{t\rightarrow\infty}\dot{\mu}(t)=-\lim_{t\rightarrow\infty}\ddot{\mu}(t)=0$. 
\item The sum of two class $\mathcal{K}$ functions belongs to class $\mathcal{K}$, and the sum of two class $\mathcal{M}$ functions divided by $2$ belongs to class $\mathcal{M}$.
\item  Function $\dot{\mu}^{\alpha}(t)$ for $\alpha\geq1$ is the derivative of a class $\mathcal{M}$ function.
\item Any exponential function of the form $\beta \exp(-\alpha t)$ with arbitrary $\alpha>0$ can be the derivative of a class $\mathcal{M}(\tau)$ function if and only if $\beta=\alpha\tau$.
\end{enumerate}
\end{lemma}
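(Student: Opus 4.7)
The plan is to verify the five items by unpacking Definitions \ref{def:1} and \ref{def:2} and exploiting the inverse-function identity $\kappa\circ\mu=\mathrm{id}$. Differentiating this identity once and twice yields
\[
\dot{\mu}(t)=\frac{1}{\dot{\kappa}(\mu(t))},\qquad \ddot{\mu}(t)=-\frac{\ddot{\kappa}(\mu(t))}{\dot{\kappa}^{3}(\mu(t))},
\]
which will be the workhorse identities for items 1 and 2, and the rest of the lemma then reduces to short direct computations.

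For item 1, positivity $\dot{\kappa}>0$ is immediate from Definition \ref{def:1}. To show $\lim_{t\to\tau^-}\dot{\kappa}(t)=\infty$, I would argue by contradiction: a bounded derivative on $[0,\tau)$ would yield $\kappa(t)\le \tau\sup\dot{\kappa}<\infty$, contradicting $\lim\kappa=\infty$ in finite time. The claimed nonnegativity and blow-up of $\ddot{\kappa}$ then follow by rearranging the inverse-function identity as $\ddot{\kappa}(\mu(t))=-\ddot{\mu}(t)\,\dot{\kappa}^{3}(\mu(t))$ and transporting the sign information about $\ddot{\mu}$ back through $\mu$. Item 2 is read off directly from the same identities: $\dot{\mu}=1/\dot{\kappa}(\mu)\to 0^{+}$ because $\mu\to\tau^{-}$ forces $\dot{\kappa}(\mu)\to\infty$, while $\ddot{\mu}=-\ddot{\kappa}(\mu)/\dot{\kappa}^{3}(\mu)\le 0$ with $\ddot{\mu}\to 0^{-}$ by the same asymptotic argument.

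For item 3, closure under addition is routine: $\kappa_1+\kappa_2$ inherits $C^n$ regularity, vanishes at zero, has positive derivative, and satisfies $\lim_{t\to\tau^-}(\kappa_1+\kappa_2)=\infty$; for class $\mathcal{M}$ the factor $1/2$ is precisely what preserves the terminal value $\lim_{t\to\infty}(\mu_1+\mu_2)/2=\tau$. For item 4, I would set $\tilde{\mu}(t)\triangleq\int_0^t \dot{\mu}^{\alpha}(s)\,\mathrm{d}s$ and verify the three defining properties of $\mathcal{M}$: $\tilde{\mu}(0)=0$, $\dot{\tilde{\mu}}=\dot{\mu}^{\alpha}>0$, and $\lim_{t\to\infty}\tilde{\mu}(t)$ is finite because $\dot{\mu}$ is nonincreasing (by item 2), so $\dot{\mu}^{\alpha}\le \dot{\mu}(0)^{\alpha-1}\dot{\mu}$ and $\int_0^\infty \dot{\mu}^{\alpha}\,\mathrm{d}s\le \dot{\mu}(0)^{\alpha-1}\tau<\infty$; here $\alpha\ge 1$ is essential for the dominating bound. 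For item 5, integrating $\dot{\mu}(t)=\beta\exp(-\alpha t)$ with $\mu(0)=0$ yields $\mu(t)=(\beta/\alpha)(1-\exp(-\alpha t))$, and the terminal condition $\lim_{t\to\infty}\mu=\tau$ is equivalent to $\beta=\alpha\tau$, with the converse checked by direct substitution.

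The main obstacle is that Definitions \ref{def:1}--\ref{def:2} do not literally enforce convexity of $\kappa$, so the nonnegativity of $\ddot{\kappa}$ claimed in item 1 is not automatic from $\dot{\kappa}>0$ alone. The cleanest resolution is to couple items 1 and 2 and prove them simultaneously through the inverse-function identities: the behavior $\dot{\mu}\searrow 0$ together with $\mu$ approaching the horizontal asymptote $\tau$ forces $\ddot{\mu}\le 0$, and pulling this back through $\ddot{\kappa}(\mu)=-\ddot{\mu}\,\dot{\kappa}^{3}(\mu)$ delivers the desired sign and blow-up of $\ddot{\kappa}$. Once items 1 and 2 are established, items 3--5 follow by the short computations outlined above.
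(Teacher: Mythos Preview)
The paper does not prove Lemma~\ref{lemma:1}; it merely cites \cite{shakouri2021prescribedeul}, so there is no in-paper argument to compare against. I therefore evaluate your proposal on its own merits.

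Your treatment of items 3 and 5 is correct, and item 4 is essentially right once you note explicitly that the limit $\int_0^\infty \dot\mu^{\alpha}\,\mathrm{d}s$ is some finite $\tilde\tau$ not necessarily equal to $\tau$ (the lemma says ``a class $\mathcal{M}$ function'', not ``class $\mathcal{M}(\tau)$'', so this is fine). However, items 1--2 contain a genuine gap, and it is precisely the one you flagged but did not resolve. Your proposed fix is circular: you write that ``$\dot\mu\searrow 0$ together with $\mu$ approaching the horizontal asymptote $\tau$ forces $\ddot\mu\le 0$'', but ``$\dot\mu\searrow 0$'' already \emph{means} $\dot\mu$ is nonincreasing, which is the very conclusion $\ddot\mu\le 0$ you are after. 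From $\dot\mu>0$ and $\mu\to\tau$ you only get $\liminf\dot\mu=0$, not pointwise $\ddot\mu\le 0$.

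In fact, under Definitions~\ref{def:1}--\ref{def:2} as written the sign claims cannot be salvaged without extra hypotheses. Take $\mu(t)=\tau\bigl(1-e^{-t}(1+a\sin t)\bigr)$ with $a\in(1/2,\,1/\sqrt{2})$: then $\mu(0)=0$, $\dot\mu(t)=\tau e^{-t}\bigl(1+a\sin t-a\cos t\bigr)>0$, and $\mu\to\tau$, so $\mu\in\mathcal{M}(\tau)$ and $\kappa=\mu^{-1}\in\mathcal{K}(\tau)$; yet $\ddot\mu(t)=\tau e^{-t}(-1+2a\cos t)$ changes sign, and by your own identity $\ddot\kappa(\mu)=-\ddot\mu\,\dot\kappa^{3}(\mu)$ so does $\ddot\kappa$. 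This also breaks your contradiction argument for $\lim_{t\to\tau^-}\dot\kappa=\infty$ (which only yields $\limsup=\infty$ without monotonicity) and the ``$\dot\mu$ nonincreasing'' step in your item 4 bound. The most plausible reading is that the cited reference carries an additional convexity or monotone-derivative assumption not reproduced in Definitions~\ref{def:1}--\ref{def:2}; your proof should state that assumption explicitly rather than attempt to derive it.
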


Henceforth, for a function $s(\cdot)$, we use the following derivative notations: $\dot{s}=\mathrm{d}s/\mathrm{d}t$ and $\acute{s}=\mathrm{d}s/\mathrm{d}\mu$. To take the role of initial time $t_0$ into account, let us define the operator $\Delta$ that acts on any variable $s$ as $\Delta s=s-t_0$ (similar to $\Delta t=t-t_0$). Also, let us define functions $\tilde{\mu}(\cdot):[t_0,\infty)\rightarrow[t_0,t_0+\tau)$ and $\tilde{\kappa}(\cdot):[t_0,t_0+\tau)\rightarrow[t_0,\infty)$ that are functions $\mu\in\mathcal{M}$ and $\kappa\in\mathcal{K}$ which are shifted $t_0$ units up and right:
\begin{subequations}
\begin{align}
\label{eq:IV-1}
\tilde{\mu}(t)&=\mu(\Delta t)+t_0,\\
\label{eq:IV-2}
\tilde{\kappa}(\tilde{\mu})&=\kappa(\Delta \tilde{\mu})+t_0.
\end{align}
\end{subequations}
One can verify that $\tilde{\mu}$ is the inverse function of $\tilde{\kappa}$, $\Delta\tilde{\mu}=\mu(\Delta t)$, and $\Delta\tilde{\kappa}=\kappa(\Delta \tilde{\mu})$. Since $\dot{\tilde{\mu}}(t)=\dot{\mu}(\Delta t)$ and $\acute{\tilde{\kappa}}(\tilde{\mu})=\acute{\kappa}(\Delta\tilde{\mu})$, we use the derivatives of $\mu(\Delta t)$ and $\kappa(\Delta\tilde{\mu})$ instead of the derivatives of $\tilde{\mu}(t)$ and $\tilde{\kappa}(\tilde{\mu})$, respectively. 

Consider a chain of integrators under an input $u=\pi_0(\xi,t)$ as follows:
\begin{equation}
\label{eq:IV-3}
\left\{
\begin{array}{lcl}
\dot{\xi}_i&=&\xi_{i+1};\quad i=1,2,\ldots,n-1 \\
\dot{\xi}_n&=&\pi_0(\xi,t)
\end{array}\right..
\end{equation}

Let us map the solutions of system \eqref{eq:IV-3} from $[t_0,\infty)$ onto $[t_0,t_0+\tau)$ with the following equivalent rules:
\begin{subequations}
\begin{align}
\label{eq:IV-4}
\chi_1(\tilde{\mu}(t))&=\xi_1(t),\hspace{4mm}&t\in[t_0,\infty),\\
\label{eq:IV-5}
\xi_1(\tilde{\kappa}(\tilde{\mu}))&=\chi_1(\tilde{\mu}),\hspace{4mm}&\tilde{\mu}\in[t_0,t_0+\tau).
\end{align}
\end{subequations}
Furthermore, we define $\mathrm{d}\chi_i/\mathrm{d}\tilde{\mu}=\chi_{i+1}$ and $\chi=[\chi_1,\ldots,\chi_n]^T$. For evaluating the derivatives of higher orders, consider the following lemmas:

\begin{lemma}[Fa\`a di Bruno's formula]
\label{lemma:2}
Suppose $s_1(\cdot)$ and $s_2(\cdot)$ are sufficiently differentiable functions. Then: 
\begin{equation}
\label{eq:lem:2-1}
\begin{split}
\frac{\mathrm{d}^i}{\mathrm{d}t^i}s_1(s_2(t))=\sum_{j=1}^is_1^{(j)}(s_2(t))B_{i,j}(s_2^{(1)}(t),\ldots,s_2^{(i-j+1)}(t)).
\end{split}
\end{equation}
\end{lemma}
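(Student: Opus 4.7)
The plan is to establish \eqref{eq:lem:2-1} by induction on $i$, using only the partial Bell polynomial recurrence \eqref{eq:note-1}. The base case $i=1$ is the ordinary chain rule: $(s_1\circ s_2)'(t)=s_1^{(1)}(s_2(t))\,s_2^{(1)}(t)$, which matches the right-hand side because the sum degenerates to the single term $j=1$ and $B_{1,1}(s_2^{(1)})=s_2^{(1)}$.

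For the inductive step, I would assume the identity holds for some $i\geq 1$ and differentiate both sides with respect to $t$. Each summand $s_1^{(j)}(s_2(t))\,B_{i,j}$ splits by the product rule into (a) an outer contribution $s_1^{(j+1)}(s_2(t))\,s_2^{(1)}(t)\,B_{i,j}$ coming from differentiating the composition, and (b) an inner contribution $s_1^{(j)}(s_2(t))\,\frac{d}{dt}B_{i,j}$ coming from differentiating the Bell polynomial itself. I would expand (b) using the standard partial-derivative identity $\partial B_{n,k}/\partial s_\ell=\binom{n}{\ell}B_{n-\ell,k-1}$, which is a direct consequence of \eqref{eq:note-1}, producing a weighted sum over $\ell$ of terms $\binom{i}{\ell}s_2^{(\ell+1)}\,B_{i-\ell,j-1}$.

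The main combinatorial step, and the one I expect to be the trickiest, is then to collect contributions to each $s_1^{(k)}(s_2(t))$ for $k\in\{1,\ldots,i+1\}$ and verify that the resulting coefficient equals $B_{i+1,k}(s_2^{(1)},\ldots,s_2^{(i-k+2)})$. The $k=j+1$ slice of (a) combines with the $k=j$ slices of (b), and after the reindexing $\ell\mapsto\ell-1$ the totality collapses into exactly the Bell recurrence \eqref{eq:note-1} applied with $n=i+1$ and $m=k$, closing the induction.

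As a cleaner alternative, I could instead invoke the exponential generating function identity $\frac{1}{k!}\bigl(\sum_{j\geq 1}x_j h^j/j!\bigr)^k=\sum_{n\geq k}B_{n,k}(x_1,\ldots,x_{n-k+1})\,h^n/n!$: substituting $x_j=s_2^{(j)}(t)$, Taylor-expanding $s_1(s_2(t+h))=\sum_k s_1^{(k)}(s_2(t))\,(s_2(t+h)-s_2(t))^k/k!$, and reading off the coefficient of $h^i/i!$ yields \eqref{eq:lem:2-1} at once, bypassing the combinatorial bookkeeping at the price of a short analytic justification near $h=0$.
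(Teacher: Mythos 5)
Your proposal is correct, but it does something the paper never attempts: the paper's entire ``proof'' of Lemma \ref{lemma:2} is a citation, noting that this is the Bell-polynomial (Riordan) form of the classical Fa\`a di Bruno formula and pointing to the combinatorics literature \cite{fa}. Your induction on $i$ is the standard self-contained derivation and the bookkeeping you describe does close: the outer contribution $s_1^{(j+1)}(s_2)\,s_2^{(1)}B_{i,j}$ supplies the $m=1$ term, the inner contributions supply the $m\geq 2$ terms after the shift of index, and together they reproduce $B_{i+1,k}=\sum_{m}\binom{i}{m-1}s_2^{(m)}B_{i+1-m,k-1}$, i.e.\ the recurrence \eqref{eq:note-1} with $n=i+1$ (note, incidentally, that \eqref{eq:note-1} as printed has an index typo --- the binomial should be $\binom{n-1}{i-1}$, not $\binom{n-1}{m-1}$ --- so your collection matches the corrected recurrence). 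The one point I would tighten is your claim that $\partial B_{n,k}/\partial x_\ell=\binom{n}{\ell}B_{n-\ell,k-1}$ is ``a direct consequence of \eqref{eq:note-1}'': that identity is standard and true, but it does not fall out of the recurrence in one line; it is cleanest from the explicit monomial expansion of $B_{n,k}$ or from the exponential generating function, so either prove it there or cite it. Your second route via the generating-function identity is essentially Comtet's proof and is equally acceptable. What each approach buys: the paper's citation keeps the letter short and leans on a textbook fact; your argument makes the lemma self-contained at the cost of either the partial-derivative identity or a short analytic justification of the formal power-series manipulation, which for a results-oriented letter is optional but not wrong.
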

\begin{proof}
This lemma is the Bell polynomial version of the well-known Fa\`a di Bruno's formula, which is also known as Riordan's formula \cite{fa}. 
\end{proof}

\begin{lemma}
\label{lemma:inv}
Let $s_1(\cdot)$ and $s_2(\cdot)$ are sufficiently differentiable and invertible functions such that $s_1$ is the inverse function of $s_2$, i.e., $s_1(s_2(t))=t$. The following formula relates the $k$th-order derivative of $s_2(t)$ to the derivatives of $s_1(s_2)$: 
\begin{equation}
\label{eq:lem:inv-1}
s_2^{(k)}(t)=r_k(s_1(s_2)),
\end{equation}
where functional $r_k$ is defined in \eqref{eq:III-4}.
\end{lemma}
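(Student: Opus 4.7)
The plan is to differentiate the inversion identity $s_2(s_1(y))=y$ repeatedly via Lemma \ref{lemma:2} and then read off $s_2^{(k)}(t)$ by inverting the resulting triangular system. Taking outer function $s_2$ and inner function $s_1$ in Fa\`a di Bruno's formula gives
\begin{equation*}
\sum_{j=1}^{k} s_2^{(j)}(s_1(y))\, B_{k,j}\bigl(s_1^{(1)}(y),\ldots,s_1^{(k-j+1)}(y)\bigr) = \frac{d^k}{dy^k}[y],
\end{equation*}
so the right-hand side equals $1$ for $k=1$ and $0$ for $k\geq 2$. Specializing to $y=s_2(t)$ (so that $s_1(y)=t$), each $s_2^{(j)}(s_1(y))$ becomes $s_2^{(j)}(t)$, producing one equation per value of $k$ that couples $s_2^{(1)}(t),\ldots,s_2^{(k)}(t)$ through the derivatives of $s_1$ at $s_2(t)$.

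Read together, these equations form a lower-triangular linear system whose row $k$ has diagonal entry $B_{k,k}(\dot s_1)=(\dot s_1)^k\neq 0$. Row $k=1$ immediately gives $s_2^{(1)}(t)=1/\dot s_1=r_1[s_1]$ evaluated at $y=s_2(t)$, consistent with the base case in Example \ref{ex:2}. For $k\geq 2$, I would solve the $k$th row explicitly for $s_2^{(k)}(t)$ as
\begin{equation*}
s_2^{(k)}(t) = -\frac{1}{(\dot s_1)^k}\sum_{j=1}^{k-1} s_2^{(j)}(t)\, B_{k,j}\bigl(\dot s_1,\ldots,s_1^{(k-j+1)}\bigr),
\end{equation*}
and substitute the inductive closed forms for $s_2^{(j)}$ with $j<k$, obtaining a rational expression in $\dot s_1,s_1^{(2)},\ldots,s_1^{(k)}$ whose denominator is a power of $\dot s_1$.

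The main obstacle will be matching this recursively built expression with the closed form \eqref{eq:III-4} defining $r_k[s_1]$. My plan is a strong induction on $k$, showing that the coefficient of $1/\dot s_1^{\,k+m}$ that survives after the substitution equals $(-1)^m R_{k-1,m}(s_1^{(2)},\ldots,s_1^{(k-m+1)})$ as defined in \eqref{eq:III-5}. Combinatorially this amounts to verifying that nesting Bell-polynomial products via forward substitution produces exactly the multinomial-weighted sum over ordered tuples $(c_1,\ldots,c_m)$ with $c_l\geq 2$ and $\sum_{l=1}^m c_l = k-1+m$ that appears in the definition of $R_{k-1,m}$. Since this combinatorial identity is the content of the classical Lagrange-type formula for derivatives of an inverse function, I would invoke reference \cite{fa} for the bookkeeping rather than unpack it term by term.
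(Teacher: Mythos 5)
Your route is genuinely different from the paper's: the paper does not derive \eqref{eq:lem:inv-1} at all, it simply notes that this inverse-function version of Fa\`a di Bruno is proved in \cite{combinv} and cites it. What you propose is the standard constructive derivation: differentiate $s_2(s_1(y))=y$ with Lemma \ref{lemma:2}, note the right-hand side is $1$ for $k=1$ and $0$ for $k\geq2$, evaluate at $y=s_2(t)$, and solve the resulting lower-triangular system with nonvanishing diagonal $B_{k,k}(\dot s_1)=(\dot s_1)^k$. That setup is sound (invertibility of $s_2$ on its range justifies passing from $s_1\circ s_2=\mathrm{id}$ to $s_2\circ s_1=\mathrm{id}$, and $\dot s_1\neq0$ follows from $\dot s_1(s_2(t))\dot s_2(t)=1$), your index bookkeeping for $r_k$ is consistent with \eqref{eq:III-4}--\eqref{eq:III-5}, and the base case matches Example \ref{ex:2}. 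The caveat is that the one step carrying all the difficulty --- showing that the forward substitution of the recursion collapses to the closed form with coefficients $(-1)^mR_{k-1,m}$ --- is exactly the content of the explicit inverse-derivative formula, and you do not execute it; you defer it to a reference. Moreover you defer it to \cite{fa}, which in this paper is the citation for the Bell-polynomial form of Fa\`a di Bruno's formula itself (Lemma \ref{lemma:2}), not for the inverse-function identity; the correct source for that identity is \cite{combinv}, which is what the paper cites. So in net effect your argument buys a transparent reduction to a triangular system (and makes clear why the formula is a finite rational expression in $\dot s_1,\dots,s_1^{(k)}$ with denominator a power of $\dot s_1$), but as written it is no more self-contained than the paper's citation unless you either carry out the strong induction against $R_{k-1,m}$ or replace the citation by \cite{combinv}.
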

\begin{proof}
Equation \eqref{eq:lem:inv-1} is an extension of the Fa\`a di Bruno's formula presented in Lemma \ref{lemma:2} for inverse functions. This version is proved in \cite{combinv}.  
\end{proof}

\begin{lemma}
\label{lemma:3}
Let $\mathrm{d}\chi_i/\mathrm{d}\tilde{\mu}=\chi_{i+1}$, $i=1,\ldots,n$. Suppose $\tilde{\mu}$ and $\chi_1$ are sufficiently differentiable functions and equations \eqref{eq:IV-3} and \eqref{eq:IV-4} hold. Then: 
\begin{subequations}
\begin{align}
\label{eq:lem:3-1a}
\xi(t)&=\mathfrak{B}_n[\kappa(\Delta \tilde{\mu})]\chi(\tilde{\mu}),\\\label{eq:lem:3-1b}
\chi(\tilde{\mu})&=\mathfrak{B}_n[\mu(\Delta t)]\xi(t).
\end{align}
\end{subequations}
\end{lemma}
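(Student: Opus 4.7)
The plan is to apply Faà di Bruno's formula (Lemma \ref{lemma:2}) to the composition $\chi_1 \circ \tilde{\mu}$ and then identify the coefficient structure with the matrix $\mathfrak{B}_n[\kappa(\Delta\tilde{\mu})]$ via Lemma \ref{lemma:inv}. Concretely, since the chain of integrators \eqref{eq:IV-3} together with the stipulation $\mathrm{d}\chi_i/\mathrm{d}\tilde{\mu}=\chi_{i+1}$ means that $\xi_i(t)=\xi_1^{(i-1)}(t)$ (derivative in $t$) and $\chi_j(\tilde{\mu})=\chi_1^{(j-1)}(\tilde{\mu})$ (derivative in $\tilde{\mu}$), I would differentiate the identity $\xi_1(t)=\chi_1(\tilde{\mu}(t))$ a total of $i-1$ times in $t$ and invoke Lemma \ref{lemma:2} with $s_1=\chi_1$ and $s_2=\tilde{\mu}$ to obtain
\begin{equation*}
\xi_i(t)=\sum_{j=1}^{i-1}\chi_1^{(j)}(\tilde{\mu})\,B_{i-1,j}\!\bigl(\tilde{\mu}^{(1)},\ldots,\tilde{\mu}^{(i-j)}\bigr).
\end{equation*}

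The next step is to translate the arguments $\tilde{\mu}^{(k)}(t)$ into the functional form $r_k[\cdot]$ dictated by \eqref{eq:III-3}. Since $\tilde{\mu}$ and $\tilde{\kappa}$ are mutual inverses by \eqref{eq:IV-1}--\eqref{eq:IV-2}, Lemma \ref{lemma:inv} yields $\tilde{\mu}^{(k)}(t)=r_k[\tilde{\kappa}]$, and the relation $\tilde{\kappa}(\tilde{\mu})=\kappa(\Delta\tilde{\mu})+t_0$ implies that the functional $r_k[\tilde{\kappa}]$ coincides with $r_k[\kappa(\Delta\tilde{\mu})]$ because $r_k$ depends only on derivatives of order $\geq 1$, which are unchanged by the constant shift $t_0$. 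Substituting back and re-indexing $j\mapsto j-1$ gives
\begin{equation*}
\xi_i(t)=\sum_{j=2}^{i}B_{i-1,j-1}\!\bigl(r_1[\kappa(\Delta\tilde{\mu})],\ldots,r_{i-j+1}[\kappa(\Delta\tilde{\mu})]\bigr)\,\chi_j(\tilde{\mu}),
\end{equation*}
which, after noting that the boundary conventions $B_{0,0}=1$ and $B_{i-1,0}=0$ for $i\geq 2$ let us extend the sum down to $j=1$ harmlessly (recovering $\xi_1=\chi_1$), is precisely the $i$th row of $\mathfrak{B}_n[\kappa(\Delta\tilde{\mu})]\chi(\tilde{\mu})$, establishing \eqref{eq:lem:3-1a}.

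For \eqref{eq:lem:3-1b} I would run the same argument with the roles of $\xi$ and $\chi$, and of $\tilde{\mu}$ and $\tilde{\kappa}$, interchanged: differentiating $\chi_1(\tilde{\mu})=\xi_1(\tilde{\kappa}(\tilde{\mu}))$ a total of $i-1$ times in $\tilde{\mu}$, applying Lemma \ref{lemma:2} with $s_1=\xi_1$ and $s_2=\tilde{\kappa}$, and then using Lemma \ref{lemma:inv} on the inverse pair to write $\tilde{\kappa}^{(k)}(\tilde{\mu})=r_k[\mu(\Delta t)]$. Alternatively, \eqref{eq:lem:3-1b} follows from \eqref{eq:lem:3-1a} once one checks that the lower-triangular matrices $\mathfrak{B}_n[\kappa(\Delta\tilde{\mu})]$ and $\mathfrak{B}_n[\mu(\Delta t)]$ are inverses of one another; this in turn is a direct consequence of the Faà di Bruno / inverse-Bell combinatorial identity, but going through Faà di Bruno twice (once in each direction) is cleaner for a self-contained presentation.

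The step I expect to be the main obstacle is the bookkeeping in the second paragraph: keeping track of which variable (\(t\) versus \(\tilde{\mu}\)) each derivative is taken with respect to, and justifying rigorously that the functional $r_k$, whose formula \eqref{eq:III-4} is written as if its argument is a function of \(t\), may equally well be applied to $\kappa(\Delta\tilde{\mu})$ viewed as a function of \(\tilde{\mu}\) and evaluated so that its output matches $\tilde{\mu}^{(k)}(t)$. Once this identification is made precise, the matching with the defining formula \eqref{eq:III-3} of $\mathfrak{B}_n$ is immediate, and the boundary cases $i=1$ and $j=1$ have to be explicitly checked against the conventions $B_{0,0}=1$, $B_{n,0}=0$ for $n\geq 1$ so that the sum ranges quoted in \eqref{eq:III-3} line up with those produced by Faà di Bruno.
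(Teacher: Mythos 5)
Your proposal is correct and follows essentially the same route as the paper's proof: apply Fa\`a di Bruno's formula (Lemma \ref{lemma:2}) to $\xi_1(t)=\chi_1(\tilde{\mu}(t))$, convert the derivatives $\tilde{\mu}^{(k)}(t)$ via the inverse-function result (Lemma \ref{lemma:inv}) into $r_k[\kappa(\Delta\tilde{\mu})]$, and identify the resulting coefficients with the rows of $\mathfrak{B}_n$ as defined in \eqref{eq:III-3}, with \eqref{eq:lem:3-1b} obtained by the symmetric argument. Your explicit treatment of the index/boundary conventions and of the harmlessness of the constant shift $t_0$ is slightly more detailed than the paper's, but it is the same proof.
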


\begin{proof}
To prove \eqref{eq:lem:3-1a}, observe that given Lemma \ref{lemma:2}, by setting $s_1=\chi_i$ and $s_2=\tilde{\mu}(t)$, one obtains: 
\begin{equation}
\label{eq:lem:3-3}
\begin{split}
\xi_{i+1}(t)&=\chi_{i+1}(\tilde{\mu}(t))=\frac{\mathrm{d}^{i}}{\mathrm{d}t^{i}}\chi(\tilde{\mu}(t))\\
&=\sum_{j=1}^{i}\chi^{(j)}(\tilde{\mu})B_{i,j}(\tilde{\mu}^{(1)}(t),\ldots,\tilde{\mu}^{(i-j+1)}(t)).
\end{split}
\end{equation}
From Lemma \ref{lemma:inv}, we have $\tilde{\mu}^{(k)}(t)=r_k(\kappa(\mu))$. Therefore, using definition \eqref{eq:III-3} and knowing that $\chi^{(j)}(\tilde{\mu})=\chi_{j+1}$, a substitution yields identity \eqref{eq:lem:3-1a}. The same procedure can be utilized for the proof of \eqref{eq:lem:3-1b}. 
\end{proof}

Observe that formulas \eqref{eq:lem:3-1a} and \eqref{eq:lem:3-1b} can also be stated in the following component forms:
\begin{subequations}
\begin{align}
\label{eq:IV-8a}
\dot{\xi}_{n}(t)&=\frac{1}{\acute{\kappa}^{n}(\Delta \tilde{\mu})}\acute{\chi}_{n}(\tilde{\mu})+\mathfrak{b}_n^T[\kappa(\Delta \tilde{\mu})]\chi(\tilde{\mu}),\\\label{eq:IV-8b}
\acute{\chi}_n(\tilde{\mu})&=\frac{1}{\dot{\mu}^n(\Delta t)}\dot{\xi}_n(t)+\mathfrak{b}_n^T[\mu(\Delta t)]\xi(t),
\end{align}
\end{subequations}
which yields the following identity:
\begin{equation}
\label{eq:IV-8p}
\dot{\mu}^n(\Delta t)\mathfrak{b}_n^T[\mu(\Delta t)]\xi(t)=-\mathfrak{b}_n^T[\kappa(\Delta \tilde{\mu})]\chi(\tilde{\mu}).
\end{equation}
Moreover, one can verify that the infinite time scale $t$ can be expressed in terms of a class $\mathcal{K}$ function with argument $\tilde{\mu}$ as follows:
\begin{equation}
\label{eq:IV-9}
t=\kappa(\Delta \tilde{\mu})+t_0.
\end{equation}
Substituting \eqref{eq:lem:3-1a}, \eqref{eq:IV-8a}, and \eqref{eq:IV-9} in system \eqref{eq:IV-3}, the following system can be obtained in terms of $\chi$ and $\tilde{\mu}$:
\begin{equation}
\label{eq:IV-10}
\left\{
\begin{array}{lcl}
\acute{\chi}_i&=&\chi_{i+1};\hspace{4mm}i=1,2,\ldots,n-1 \\
\acute{\chi}_n&=&\acute{\kappa}^n(\Delta \tilde{\mu})\left(\pi_0(\chi_{\kappa},\tilde{\mu}_{\kappa})-\mathfrak{b}_{n}^T[\kappa(\Delta \tilde{\mu})]\chi\right)
\end{array}\right.,
\end{equation}
where $\chi_{\kappa}=\mathfrak{B}_n[\kappa(\Delta \tilde{\mu})]\chi$ and $\tilde{\mu}_{\kappa}=\kappa(\Delta \tilde{\mu})+t_0$. System \eqref{eq:IV-10} imitates the infinite-time behavior of system \eqref{eq:IV-3} in a prescribed finite time interval of $\tau$ such that the maximum and minimum of $\xi_1$ remains the same for $\chi_1$. Observe that system \eqref{eq:IV-10} is the closed-loop response of system \eqref{eq:1} with $f=0$ and $g(x,t)=1$ under $u=\pi(x,t,\tau)$, when $x$ and $t$ are replaced by $\chi$ and $\tilde{\mu}$.

Applying the disturbance term $f$ to system \eqref{eq:IV-10} and inverse mapping it from $[t_0,t_0+\tau)$ onto $[t_0,\infty)$, yields the following system:
\begin{equation}
\label{eq:IV-11}
\left\{
\begin{array}{lcl}
\dot{\xi}_i&=&\xi_{i+1};\quad i=1,2,\ldots,n-1 \\
\dot{\xi}_n&=&\dot{\mu}^n(\Delta t)f(\xi_{\mu},u_{\mu},t_{\mu})+\pi_0(\xi,t)
\end{array}\right.,
\end{equation}
where $\xi_{\mu}$, $u_{\mu}$, and $t_{\mu}$ are described in \eqref{eq:th:1-6}, \eqref{eq:th:1-7}, and \eqref{eq:th:1-8}, respectively. Therefore, system \eqref{eq:IV-11} is a chain of integrators under a (most likely vanishing) matched disturbance $\dot{\mu}^n(\Delta t)f(\xi_{\mu},u_{\mu},t_{\mu})$ with $\pi_0(\xi,t)$ as an input signal. Given Assumption \ref{ass:1}, we know that $f(\xi_{\mu},u_{\mu},t_{\mu})\in\mathcal{F}(\xi_{\mu},u_{\mu},t_{\mu})$. Since variables $\xi_{\mu}$, $u_{\mu}$, $\pi_0(\xi,t)$, and $t_{\mu}=\tilde{\mu}(t)$ are functions of $\xi$ and $t$, we rename  $f(\xi_{\mu},u_{\mu},t_{\mu})$ as $f_{\mu}(\xi,t)$ to be directly expressed in terms of the system variables. As the final step, verify that according to the developed formulas, an output signal of system \eqref{eq:IV-11} written as $h(\xi_\mu,u_\mu,t_\mu)$ evaluated at $t_\mu$, is equal to output signal $h(x,u,t)$ of system \eqref{eq:1} under $u=\pi(x,t,\tau)$ evaluated at $t$.

\section{Conclusions}    
\label{sec:VI}

A general time-transformation methodology has been formulated, which unifies the mapping approaches into a universal representation towards prescribed-time control (PTC) design. In the light of the presented framework, it has been shown that a nonasymptotic PTC design subject to any nonlinear control objectives can be viewed as a traditional asymptotic control design for the associated infinite-time system under a new set of control objectives.




\bibliographystyle{IEEEtran}
\bibliography{biblo}

\end{document}